\documentclass[conference]{IEEEtran}
\IEEEoverridecommandlockouts

\usepackage{cite}
\usepackage{mathtools}
\usepackage{amsmath,amssymb,amsfonts}
\usepackage{amsthm}
\usepackage{algorithm}
\usepackage{algorithmic}
\newtheorem{proposition}{Proposition}
\usepackage{graphicx}
\usepackage{booktabs}
\usepackage{subcaption}
\usepackage{url}
\usepackage[table]{xcolor}
\usepackage{textcomp}
\usepackage{comment}
\usepackage{booktabs}
\usepackage{tabularx}
\usepackage{array}
\usepackage{enumitem}
\usepackage{url}
\usepackage[most]{tcolorbox}
\usepackage[hidelinks]{hyperref}
\newcolumntype{L}[1]{>{\raggedright\arraybackslash}p{#1}}
\newcolumntype{Y}{>{\raggedright\arraybackslash}X}

\def\BibTeX{{\rm B\kern-.05em{\sc i\kern-.025em b}\kern-.08em
    T\kern-.1667em\lower.7ex\hbox{E}\kern-.125emX}}

\newcommand{\Mt}{M_t}
\newcommand{\Wt}{W_t}
\newcommand{\Ut}{U_t}
\newcommand{\Dt}{D_t}
\newcommand{\At}{A_t}
\newcommand{\Bt}{B_t}
\newcommand{\RT}{R_T}
\newcommand{\RD}{R_D}

\newcommand{\code}[1]{\texttt{\small #1}}

\newcolumntype{C}[1]{>{\centering\arraybackslash}p{#1}}

\begin{document}

\title{TrackShield: An Efficient Runtime Defense Against Availability Attacks on High-Performance Real-Time Multi-Object Tracking
\thanks{This work was partially supported by the Advanced Summer
Program in Research (ASPiRE) at Hofstra University, and the Swedish Research Council under Grant No. 2023-04485.}
}

\author{
\IEEEauthorblockN{
Zonghua Gu\textsuperscript{1},
Julian Singh-Smith\textsuperscript{1},
Junlin Liao\textsuperscript{1},
and Di Liu\textsuperscript{2}
}
\IEEEauthorblockA{
\textsuperscript{1}Hofstra University, Hempstead, NY, USA\\
\textsuperscript{2}Norwegian University of Science and Technology,
Trondheim, Norway
}
}

\maketitle
\begin{abstract}
High-performance camera-based perception pipelines must sustain per-frame real-time budgets on resource-constrained embedded GPU platforms. Availability attacks such as SlowTrack inflate the persistent state of multi-object trackers: data-association workload accumulates across frames, so per-frame latency increases and drives the pipeline past its deadline. Detector- and NMS-layer defenses do not directly control this persistent tracker-side resource. We present TrackShield, a lightweight runtime resource-management layer that protects the tracker at the point where attack-induced workload accumulates. TrackShield combines low-overhead anomaly monitoring with staged birth capping, pruning of young or weakly matched tracks, and per-stage association-input admission control. Its hard admission guard provides a deterministic bound on admitted association workload, independently of the heuristic monitor. On a Jetson AGX Orin, TrackShield reduces attacked tracking-stage latency from $139.90$ to $6.27$\,ms and whole-pipeline P99 latency from $190.9$ to $48.8$\,ms, at a benign MOTA cost of $1.06$ points. As an availability defense, TrackShield focuses on keeping the tracker responsive under attack instead of recovering the full tracking accuracy degraded by adversarial detections.
\end{abstract}

\begin{IEEEkeywords}
adversarial machine learning, availability attacks, multi-object
tracking, autonomous driving, real-time systems
\end{IEEEkeywords}
\section{Introduction}
\label{sec:intro}

\begin{figure*}[!t]
\centering
\includegraphics[width=0.92\textwidth]{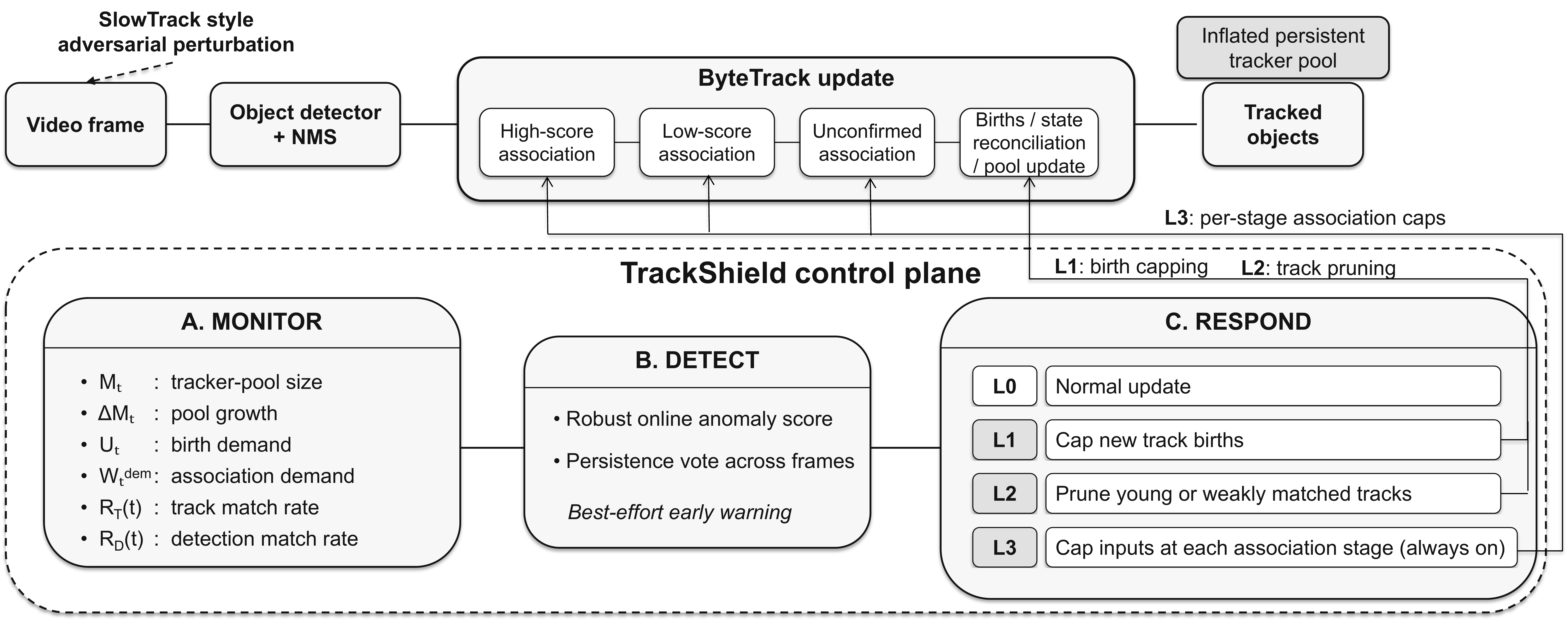}
\caption{\textbf{TrackShield overview.} SlowTrack-style perturbations
inflate ByteTrack's persistent tracker pool and association demand.
TrackShield monitors tracker-state signals (counts only, not timing)
using robust EWMA/MAD scores and a $q$-of-$w$ persistence vote,
escalating the heuristic response from normal operation to L1 birth
capping and L2 pruning of young or weakly matched tracks. The always-on L3 hard admission guard caps the track and
detection candidates admitted to every protected association stage.}
\label{fig:overview}
\end{figure*}

Camera-based perception in autonomous driving must complete within a real-time
per-frame budget, commonly on the order of 33--100\,ms in the evaluated
frame-rate range (the specific value is workload- and architecture-dependent), to
feed downstream planning and control. Perception is conventionally organized as a
\textbf{tracking-by-detection} pipeline: a detector (e.g., YOLOX~\cite{yolox}) produces bounding boxes per frame, and a multi-object tracker (e.g., SORT~\cite{sort},
ByteTrack~\cite{bytetrack}, BoT-SORT~\cite{botsort}) associates those boxes across frames into
persistent tracks using a motion model (Kalman filter) and an assignment step
(Hungarian / greedy matching over an IoU or appearance cost matrix).

Most adversarial-ML research on this pipeline has targeted
\textbf{integrity}, causing objects to disappear, be misclassified, or have their identities swapped~\cite{trackerhijacking}. A distinct class of attack
targets \textbf{availability}: the attacker's optimization target is
computational cost, forcing a real-time deadline
miss~\cite{gu2026deep}. The two are not exclusive. The
adversarial detections produced by an availability attack may also degrade
tracking integrity, and in our measurements they do so severely
(Section~\ref{sec:accuracy}); we measure the two effects separately. Availability attacks threaten the timeliness of downstream perception-dependent control.

SlowTrack~\cite{slowtrack} exposes a gap in detector-side defenses:
an adversary can inflate persistent tracker state across frames even
when the number of detections retained in each frame is capped.
Existing approaches primarily transform or harden detector inputs and
outputs, whereas the exploited workload is formed inside the tracker
update. TrackShield addresses this gap at runtime by monitoring and
controlling tracker-pool growth and admitted association work.

This paper makes three contributions:
\begin{enumerate}
\setlength{\itemsep}{1pt}
\setlength{\parskip}{0pt}

\item \textbf{A runtime tracking-layer defense.}
TrackShield acts inside tracker management, where SlowTrack-class attacks
accumulate persistent state, and combines staged birth capping, pruning of young or weakly matched
tracks, and association-input capping without retraining.

\item \textbf{Hardware-agnostic signal definitions.} TrackShield monitors six
tracker-internal quantities: pool size $M_t$, pool growth $\Delta M_t$,
birth demand $U_t$, the coarse association-demand proxy
$W_t^{\mathrm{dem}}=M_t|\mathcal D_t|$, and the track- and
detection-association ratios $R_T(t)$ and $R_D(t)$, where
$\mathcal D_t$ is the full post-threshold detection set entering the
tracker.

\item \textbf{A deterministic admitted-workload bound.}
The hard admission guard caps the track and detection candidates entering each
protected association stage, yielding the per-stage dimension bound of
Proposition~\ref{prop:cap}, independently of the heuristic monitor.
\end{enumerate}

\section{Related Work}
\label{sec:related}

\paragraph*{Availability attacks on detectors and pipelines}
\looseness=-1 Availability attacks increase inference-time computation rather than
directly corrupting model outputs; a recent cross-domain survey reviews
such latency attacks and their defenses~\cite{gu2026deep}. In camera perception, attacks such
as Overload generate dense candidate boxes to amplify detector
post-processing and NMS cost~\cite{overload}. Corresponding defenses
operate primarily at the input or detector layer: Schoof \emph{et al.} report that dynamic random $y$-axis resizing reduces
NMS execution time by up to $90.18\%$ on an attacked YOLOv5-L model
\cite{beyondphantomsponges}; Underload hardens the detector through
adversarial training; and the EVADE evaluation framework tests defenses
including a fixed post-NMS detection cap~\cite{beyondphantomsponges,underload,evade}.
EVADE's own findings also question the practical transfer of several NMS
latency attacks across hardware and model families~\cite{evade}, further
motivating evaluation at the actual resource boundary a given attack exploits (for TrackShield, the tracker pool). These
approaches can limit detector-side or per-frame work, but they do not manage
state that persists in a downstream tracker.

\paragraph*{SlowTrack and tracker-pool inflation}
\looseness=-1 SlowTrack extends availability attacks to the full
detection--tracking pipeline by creating detections that initialize
phantom tracks and keep them alive across frames~\cite{slowtrack}.
With retention horizon \(R\) and per-frame detection cap
\(\lvert D\rvert_{\max}\), its analysis gives a tracker pool of
\(m=R\lvert D\rvert_{\max}\) and association complexity
\(O(R\lvert D\rvert_{\max}^{2})\). SlowTrack remains effective under
the top-500 detection cap in its ByteTrack pipeline because
that cap bounds current-frame boxes, not the cross-frame tracker pool.
TrackFlood~\cite{gu2026trackflood} shows that NMS-free detectors relocate
rather than eliminate this attack surface, shifting the availability
bottleneck to downstream tracking. Its bounded-admission layer limits the
input at the detector--tracker boundary, whereas TrackShield caps both
the track and detection dimensions of each association stage and controls
the persistent state that accumulates across frames. Boundary admission
also bounds per-detection work before association, such as ReID
(Section~\ref{sec:crosstracker}), which in-tracker caps cannot; the two defenses are complementary.
Related MOT attacks target identities or trajectories rather than
computational availability~\cite{trackerhijacking}.

\paragraph*{Runtime resource controls and MOT robustness}
\looseness=-1 Runtime controls are effective only when placed at the resource
boundary they are intended to constrain. Constant-time NMS is discussed as a
mitigation direction, though the released code for that work
implements the timing attack rather than a constant-time defense~\cite{biton};
it and fixed detection caps bound aspects of detector post-processing~\cite{biton,evade}. SlowPerception~\cite{slowperception}
suggests, but does not evaluate, limiting the tracking-pool size as a
defense and notes the risk to benign tracking; fixed tracker caps may
discard legitimate objects in dense scenes.
TrackShield instead operates inside the tracker update without
retraining: its early responses cap births and prune young or weakly matched
tracks, while its deterministic hard admission guard bounds the dimensions of
each admitted association matrix.  GateDrain~\cite{gu2026gatedrain} shows a similar threat in confidence-gated edge--cloud inference: bounded perturbations redirect locally answerable requests into a shared cloud queue without raising the request rate, which can sharply degrade benign tail latency near capacity, and adaptive attacks defeat the evaluated training-free preprocessing defenses. Its Bounded Escalation defense applies the same admission-side principle as TrackShield, bounding admission to the shared cloud resource through per-source budgets and protected capacity.

\begin{table}[!h]
\caption{Positioning of representative defense families. ``Work bound''
denotes the computation directly constrained by each method.}
\label{tab:positioning}
\centering
\small
\setlength{\tabcolsep}{3.2pt}
\renewcommand{\arraystretch}{1.10}
\begin{tabularx}{\columnwidth}{@{}p{2.6cm}p{1.12cm}c c X@{}}
\hline
\textbf{Method family} &
\textbf{Layer} &
\textbf{Retrain} &
\textbf{State} &
\textbf{Work bound} \\
\toprule
Input transforms /
resizing~\cite{beyondphantomsponges}
  & Input--detector & No & No & None \\
\midrule
Underload~\cite{underload}
  & Detector & Yes & No & Detector-side only \\
\hline
Detection cap~\cite{evade}
  & Detector output & No & No & Per-frame boxes \\
\hline
Constant-time NMS (proposed)~\cite{biton}
  & NMS & No & No & NMS processing \\
\hline
Pool-size threshold (suggested, not evaluated)~\cite{slowperception}
  & Tracker & No & Yes & Configured tracker resource \\
\hline
TrackShield
  & Tracker & No & Yes & Admitted association dimensions \\
\bottomrule
\end{tabularx}
\end{table}

\looseness=-1 Detector-side defenses (Table~\ref{tab:positioning}) constrain current-frame computation, whereas
TrackShield acts on persistent tracker state and bounds admitted track
association work.

\section{Background and Threat Model}
\label{sec:background}

\subsection{Tracking-by-Detection and the Tracker Pool}

Let frame $t$ yield a detection set $\Dt$ after confidence thresholding and
non-maximum suppression (NMS). A tracking-by-detection tracker maintains a
pool of tracks $\mathcal{T}_{t-1}$ from the preceding frame, with
$\Mt = |\mathcal{T}_{t-1}|$. Each frame, it (1) predicts each track's location,
typically using a Kalman filter; (2) compares the $\Mt$ predicted tracks with
the $|\Dt|$ retained detections; (3) solves an assignment and obtains a match
set $\At$; (4) initializes tracks from unmatched detections; and (5) retains
temporarily unmatched tracks in a lost-track buffer before deleting them.

\subsection{ByteTrack Data Association}
\label{sec:bytetrack}

ByteTrack~\cite{bytetrack} splits each frame's detections into
three confidence regions by two thresholds
$\tau_{\mathrm{high}}>\tau_{\mathrm{low}}$: a high band
$D_t^{\mathrm{high}}$, a middle band $D_t^{\mathrm{low}}$, and boxes below
$\tau_{\mathrm{low}}$ which are discarded. The middle band is retained because a
low score often means occluded, blurred, or truncated rather than background:
such a box is unsafe for \emph{creating} an identity but may be credible enough
to \emph{maintain} one. Each frame runs a cascade: Kalman prediction,
then association against $D_t^{\mathrm{high}}$ (high-score stage), then the unmatched tracks that are still in the Tracked state against
$D_t^{\mathrm{low}}$ (low-score stage, a recovery pass; previously Lost
tracks are re-activated only through high-score association), then unconfirmed-track resolution and births. After these stages, \emph{reconciliation} performs the tracker's
end-of-frame state bookkeeping: it applies match and birth results,
updates track states, moves unmatched tracks between the tracked and lost
sets, removes expired or duplicate entries, and constructs the persistent
state carried into frame $t+1$. Because the tracker runs \emph{three} association stages per frame,
association work is a \emph{sum} of per-stage costs, and a cap on the first
stage alone is insufficient. The low-score stage is also directly
attackable: flooding its detection set keeps phantom tracked trajectories
alive and enlarges the second-pass matrix even when the high-score stage
looks nominal.

\subsection{SlowTrack Attack Mechanism}
\label{sec:slowtrack}
SlowTrack adds an $\ell_2$-style adversarial perturbation to each frame and
optimizes score, box-area, and feature-matching losses to create numerous
retained detections, initialize new trackers, and subsequently re-find lost
trackers before they are deleted~\cite{slowtrack}. Its effect is temporal: phantom tracks accumulate over the retention horizon.

SlowTrack formalizes this mechanism through the data-association complexity
$O(m n_0)$, where $m$ is the number of tracks in the tracker pool and $n_0$ is
the number of reserved detection boxes in the current frame~\cite{slowtrack}.
Association work grows approximately with
``number of tracks $\times$ number of boxes.'' In SlowTrack's retention-horizon analysis~\cite{slowtrack}, if the detector
retains at most $|D|_{\max}$ boxes per frame and unmatched tracks persist for
up to $R$ frames, the attack-induced workload scale is
\[
    m = R|D|_{\max},
    \qquad
    O(mn_0) = O(R|D|_{\max}^{2}).
\]
The attack therefore remains effective even when a detector-side cap limits
the instantaneous detection count: the exploited resource is the tracker pool accumulated across frames. Suppose unmatched tracks expire after $R$ frames and each detection use
refreshes or creates at most one track. Then every live track was born or
matched within the last $R{+}1$ frames, so the pool holds at most $R{+}1$
times the number of detection uses per frame: $|D|_{\max}$ in stock
ByteTrack, and up to $|D|_{\max}+100$ in the evaluated tracker, whose second
stage can reuse first-stage detections (Section~\ref{sec:setup}). SlowTrack
can drive the pool far beyond real-time capacity well before this bound,
which does not hold for guard-excluded tracks on the evaluated host
(Section~\ref{sec:respond}).
Because association demand $\Wt^{\mathrm{dem}}\approx|\Dt|\Mt$ keeps growing with the pool despite a detection cap, effective mitigation must
control persistent state and association admission inside the tracker update.

\subsection{Threat Model for the Defender}
\label{sec:threat}
We assume a defender who:
\begin{itemize}[leftmargin=6pt]
  \item \textbf{Cannot} observe the perturbation, attack gradients, or
  the attacker's loss (black-box with respect to the attack).
  \item \textbf{Can} observe the tracker's internal state each frame:
  the post-threshold detection set $\Dt$, the stage-local match sets
  $\{\mathcal A_{t,s}\}_{s=1}^{S}$, the frame-entry persistent-pool size
  $\Mt$ (tracked plus lost), pre-cap birth demand $\Ut$, realized births
  $\Bt$, and each track's age, confidence, and hit streak (its matched-update count since activation or
  re-activation: the host's tracklet length, which is $0$ at activation,
  increases by one on each matched update, and resets on re-activation).
  \item Operates \textbf{at inference time only}, without retraining or
  model-weight access.
\end{itemize}

On the attacker side, we assume the SlowTrack setting: an
$\ell_2$-style perturbation is optimized against the \emph{undefended}
pipeline to increase tracker-management work. The attacker may know the
architecture and that TrackShield is deployed, and may influence monitored
tracker state through adversarial detections, but cannot bypass the guard
logic. We evaluate transferred adversarial frames; Section~\ref{sec:limitations} discusses adaptive attacks.

\begin{tcolorbox}[breakable,colback=gray!6,colframe=gray!45,boxrule=0.4pt,
arc=1.5pt,left=4pt,right=4pt,top=3pt,bottom=3pt,
fonttitle=\bfseries\small,title={Intuition: protect the queue, not only the door}]
\small
An MOT system has two stages: a detector that draws bounding boxes around objects in each frame, and a tracker that carries identities across frames. The tracker keeps a pool of tracks, including recently lost ones it hopes to re-find. Think of the tracker as a supermarket checkout queue: the detector controls
how many customers arrive per minute (detections per frame), while the tracker
pool is the backlog of customers still waiting (tracks carried across frames). A per-frame detection cap limits arrivals but not the accumulated queue.
SlowTrack exploits this gap by sending in phantom customers who never check out (tracks that persist across frames), growing the pool and hence the
association work, which is roughly proportional to the size of the
association matrix. TrackShield instead monitors the queue itself (its size, growth, birth pressure, and association rates) and progressively caps new
births, prunes young or weakly matched tracks, and, when necessary, caps the tracks
and detections admitted to each association stage. These controls use tracker-internal counts and ratios only, requiring no GPU timing, retraining, or additional neural network.
\end{tcolorbox}

\section{TrackShield Design}
\label{sec:design}
TrackShield consists of a set of hook functions (Fig.~\ref{fig:overview}) executed inside the existing tracker update, using 
variables that the tracker already computes. 

\subsection{Monitored Signals}
\label{sec:signals}
We form a six-dimensional feature vector from counters already
available in ByteTrack:
\begingroup
\begin{equation}
x_t=\big[\Mt,\Delta\Mt,U_t,\Wt^{\mathrm{dem}},\RT,\RD\big],
\label{eq:features}
\end{equation}
\endgroup
where $\Mt$ is the frame-entry persistent pool (tracked plus lost),
$\Delta\Mt=\Mt-M_{t-1}$, $\Ut$ is pre-cap birth demand (the number of unmatched detections presented to the birth step before ByteTrack's activation-score filter and L1 capping, so it can exceed the realized births $\Bt$), and
$\Wt^{\mathrm{dem}}=\Mt|\Dt|$ is a coarse association-demand proxy,
where $\Dt$ is the full post-threshold detection set entering the
tracker before its high- and low-score partition. This frame-level quantity is a coarse demand proxy rather than the realized association work, which depends on the candidate dimensions at
the individual association stages. All six signals are counts or ratios
rather than wall-clock times: a latency-based trigger would need re-tuning
for each device and would fire under thermal throttling or contention from
other processes.

Let $\widetilde{\mathcal T}_{t,s}$ and
$\widetilde{\mathcal D}_{t,s}$ denote the track and detection candidates
admitted to protected stage $s$, immediately before its cost matrix is
constructed, and let $A_{t,s}$ be that stage's matches. We define
\begin{equation}%
\RT(t)=
\frac{\sum_s|A_{t,s}|}
     {\sum_s|\widetilde{\mathcal T}_{t,s}|},
\qquad
\RD(t)=
\frac{\sum_s|A_{t,s}|}
     {\sum_s|\widetilde{\mathcal D}_{t,s}|}.
\label{eq:ratios}
\end{equation}
These stage-local, post-cap definitions guarantee $\RT,\RD\in[0,1]$. 

The ratios are valid only when candidates exist on both sides:
\begin{equation}\small
a_t=\mathbf{1}\!\left[
\sum_s|\widetilde{\mathcal T}_{t,s}|>0
\land
\sum_s|\widetilde{\mathcal D}_{t,s}|>0
\right].
\label{eq:assocvalid}
\end{equation}
This convention could mark a frame valid when no single stage has both
tracks and detections; no logged run produced such a frame.
When $a_t=0$, both ratios are assigned the neutral value $1$ and excluded
from scoring and baseline adaptation. The remaining signals capture the
attack's main effects: pool state and growth, birth pressure, and
association demand. Trusted benign data initialize the monitor before
deployment. %

\noindent\textbf{Toy example.}
Suppose one frame has three protected association stages with admitted
track/detection counts $(|\widetilde{\mathcal T}_{t,s}|, |\widetilde{\mathcal D}_{t,s}|) = ((4,5), \quad (2,3),\quad (1,2))$, and the stages produce $|A_{t,s}| = (3, 1, 1)$ matches, respectively. Then
\[
\sum_s |A_{t,s}|=5,\qquad
\sum_s |\widetilde{\mathcal T}_{t,s}|=7,\qquad
\sum_s |\widetilde{\mathcal D}_{t,s}|=10,
\]
\[
R_T(t)=5/7\approx0.71,
\qquad
R_D(t)=5/10=0.50,
\qquad
a_t=1.
\]
Thus, $71\%$ of the admitted track candidates and $50\%$ of the admitted
detection candidates participate in matches. If instead every stage has
zero admitted tracks, then
$\sum_s|\widetilde{\mathcal T}_{t,s}|=0$, so $a_t=0$; both ratios are set
to the neutral value $1$ and excluded from scoring and baseline
adaptation.

The six signals in Equation \ref{eq:features} follow from the attack's governing relation. Writing $M^{\mathrm{pre}}_t$ for the pool carried into frame $t$ and
$M^{\mathrm{post}}_t$ for the pool after births, deletions, and reconciliation,
\begin{equation}\small
M^{\mathrm{post}}_t = M^{\mathrm{pre}}_t + \Bt - X_t + C_t,
\label{eq:conservation}
\end{equation}
where $B_t$ is the number of realized births, $X_t$ is the number of
realized deletions, and $C_t$ collects any remaining change caused by
end-of-frame state reconciliation, such as duplicate removal or
tracked/lost-list merging. The current monitoring interface does not expose the deletions $X_t$ or the
composition of $C_t$ separately. What the monitor \emph{can} see is the left-hand side and the birth
\emph{demand} $U_t$ that drives $\Bt$, which is why \eqref{eq:features} tracks
$\Mt$, its increment, and $U_t$ rather than the terms it does not observe. SlowTrack
attacks the balance by inflating $U_t$ while suppressing $X_t$ (phantoms keep
being re-detected, so they never age out), so a defense must act on the two
terms it can control: admit fewer births, and delete more aggressively. The first four terms capture pool state, growth, birth pressure, and
association demand; the two ratios are intended to separate a normal
busy scene, in which real objects keep matching frame to frame, from phantom
inflation due to attacks, in which injected boxes match less often. SlowTrack re-finds some
phantoms, so the ratios fall but do not vanish: on attacked L1--L2 runs the median $R_T$ and $R_D$ were $0.52$
and $0.25$, against $0.70$ and $0.55$ on benign L1--L3 runs on the Orin.

\subsection{Robust Anomaly Detection}
\label{sec:detect}

Whether a workload is unusual depends on the scene: twelve tracks is high on an empty road and normal at a crowded intersection. TrackShield
therefore learns each signal's normal level instead of using absolute
thresholds. For each signal $i$, it maintains an exponentially weighted
moving-average (EWMA) estimate $\mu_{i,t}$ of its normal level: $\mu_{i,t}=(1-\alpha)\mu_{i,t-1}+\alpha x_{i,t}$,
where $\alpha$ controls how quickly the baseline adapts to recent
observations. It also maintains a windowed median absolute deviation
$\mathrm{MAD}_{i,t}$ to estimate the signal's normal variability. 

An initialization parameter $L_{\mathrm{init}}$ specifies the minimum
number of baseline-eligible samples required for each signal before
anomaly scoring is enabled. Until its baseline buffer contains
$L_{\mathrm{init}}$ samples, the corresponding robust score is set to
zero. Thereafter, baseline-eligible samples continuously update the EWMA
and the sliding baseline window, which retains at most $L$ samples.

A frame contributes to the baseline only if its anomaly score does not
exceed $\tau_{\mathrm{ano}}$ and no admission cap binds on it; ratio
statistics additionally require valid measurements. This limits contamination by obvious attacks.

The robust scores and aggregate anomaly score are
\begin{equation}\small
\begin{aligned}
z_i(t)&=\mathrm{clip}\!\left(
\frac{x_i(t)-\mu_i(t-1)}
     {s_i(t-1)},
-12,12\right),\\
S_t&=\sum_{\mathclap{i\in\{M,\Delta M,U,W^{\mathrm{dem}}\}}}w_i[z_i(t)]_+
+a_t\sum_{\mathclap{i\in\{R_T,R_D\}}}w_i[-z_i(t)]_+ ,
\end{aligned}
\label{eq:zscore}
\end{equation}
where $[u]_+=\max(u,0)$. Count-like signals are suspicious when they rise,
whereas match ratios are suspicious when they fall. The factor $1.4826$
provides Gaussian-consistent MAD scaling \cite{mad_hampel} within the scale
$s_i=\max\{1.4826\,\mathrm{MAD}_i+\lambda|\mathrm{med}_i|,\ \varphi_i\}+\epsilon_0$,
computed from the window before the current update,
where $\mathrm{med}_i$ is the window median, $\lambda=0.05$ is a relative
floor, $\varphi_i$ is an absolute floor ($1$ for count-like signals and
$0.02$ for ratios), and $\epsilon_0=10^{-6}$. We use MAD to reduce the influence of isolated workload bursts, such as a group entering the scene, on the scale estimate. The relative floor $\lambda|\mathrm{med}_i|$ reduces sensitivity to small changes when MAD is near zero, as for a static pool, and scales with scene density.

TrackShield distinguishes emerging and sustained anomalies. A
frame casts an anomaly vote when $S_t>\tau_{\mathrm{ano}}$, where $S_t$
is the aggregate robust anomaly score. The number of votes in the most
recent $w$ frames is
\begin{equation}\small
n_V(t)=\sum_{j=t-w+1}^{t}
\mathbf{1}[S_j>\tau_{\mathrm{ano}}].
\label{eq:nv}
\end{equation}
L1 handles an emerging anomaly: the current frame exceeds the
anomaly threshold, $S_t>\tau_{\mathrm{ano}}$, but the vote count has not
yet reached the sustained-response quorum, $n_V(t)<q$. L2 handles a
sustained anomaly and requires $n_V(t)\geq q$. With consecutive above-threshold scores, L2 engages after $q$ anomaly votes; a burst that casts fewer than $q$ votes in every $w$-frame window triggers at most L1. %

\subsection{TrackShield Levels}
\label{sec:respond}

TrackShield separates a best-effort heuristic detector from a deterministic mechanism:
L1--L2 use EWMA/MAD scores and persistence voting only to trigger cheaper, earlier responses. This detector may miss or
over-trigger; the L3 hard admission guard bounds the track and detection dimensions
admitted before cost-matrix construction, independently of  the statistical monitor. L3 is always on; it is not a severity level above L2.
\begin{tcolorbox}[breakable,colback=gray!6,colframe=gray!45,boxrule=0.4pt,
arc=1.5pt,left=4pt,right=4pt,top=3pt,bottom=3pt,
fonttitle=\bfseries\small,title={Intuition: TrackShield levels of defense}]
\small
Suppose the detector retains at most $|D|_{\max}{=}5$ boxes per frame
and lost tracks survive $R{=}4$ frames. In a benign scene with two
pedestrians, the pool holds $M_t{\approx}2$ tracks and association
touches roughly $2{\times}2{=}4$ candidate pairs per frame. Under SlowTrack, the five retained boxes are phantoms, and each box either
spawns a new track or refreshes an existing phantom track; a track that is
not refreshed expires after $R$ frames. Five such events per frame can keep
up to $R{\times}5$ tracks alive, so the pool grows toward
$R|D|_{\max}{=}20$ while the per-frame box cap is never violated;
association work grows to roughly $20{\times}5{=}100$ pairs, a
$25\times$ inflation. TrackShield controls this workload at three
points:
\begin{itemize}[leftmargin=6pt]
  \item \emph{L1 (birth capping).} Admit at most $B_{\max}$ new tracks
  per frame, ranked by detection confidence. This leaves existing tracks
  unchanged but may suppress or delay legitimate births when demand
  exceeds $B_{\max}$.
  \item \emph{L2 (track pruning).} Delete tracks flagged by at least
  two of three indicators (young age, short hit streak,
  and low confidence), but only under the sustained $q$-of-$w$ anomaly
  vote.
  \item \emph{L3 (per-stage association caps).} Before every association
  stage, admit at most $\bar K_{T,s}$ tracks and $\bar K_{D,s}$
  detections (e.g., $4{\times}4{=}16$ candidate pairs), regardless of
  how many candidates exist. This always-evaluated hard admission guard
  provides the admitted-dimension bound even if the statistical monitor
  is fooled.
\end{itemize}
\end{tcolorbox}

\paragraph*{L1 (birth capping)}
L1 admits at most $B_{\max}$ new tracks per frame, ranked by detection
confidence.

\paragraph*{L2 (track pruning)}
L2 removes suspicious tracks from the persistent pool. For each track
$j$, it computes
\begin{equation}
\small
P_j =
\mathbf{1}[\mathrm{age}_j<a_0]
+\mathbf{1}[\mathrm{hit\_streak}_j<h_0]
+\mathbf{1}[\mathrm{conf}_j<c_0],
\label{eq:phantom}
\end{equation}
where $a_0$, $h_0$, and $c_0$ are the age, hit-streak, and confidence
thresholds, respectively, and
$\mathrm{hit\_streak}_j$ is the matched-update count of track $j$ since
activation or re-activation (Section~\ref{sec:threat}); age is counted
from the track's start frame, so a newborn has age $1$. We use
$(a_0,h_0,c_0)=(3,2,0.3)$ and flag track $j$ when
$P_j\geq\gamma$, with $\gamma=2$ by default. Pruning requires at least two of three indicators: young age, a short hit streak, and low confidence. Requiring two indicators reflects the
asymmetry between retaining a phantom temporarily, which adds workload,
and deleting a real track, which can cause an identity switch.
A newborn track, however, carries both the young-age and short-hit-streak
indicators on its birth frame, whatever its confidence, and pruning runs
after birth initialization (Algorithm~\ref{alg:trackshield}). While L2 is sustained, it therefore removes every newborn track, trading
new-track admission for a smaller persistent pool. In the attacked accuracy replay at
$\tau_{\mathrm{ano}}=10$ (Section~\ref{sec:accuracy}), L2 is active on
$341$ of $400$ frames and removes all $1{,}361$ tracks born on those frames.

L2 \emph{deletes} a flagged track from both persistent lists
(\code{tracked\_stracks} and \code{lost\_stracks}), so it does not carry
forward. By contrast, the L3 guard only \emph{excludes} a track from the
current association matrix while retaining its state for possible later
admission. Deletion is more destructive: a wrongly pruned real
object may later be reinitialized with a new identity, potentially causing
an identity switch or trajectory fragmentation.
Pruning is enabled only under the sustained $q$-of-$w$ quorum, which
reduces responses to isolated anomalies.

\paragraph*{L3 (per-stage association caps)}
The L3 hard admission guard caps the inputs to each protected association stage before cost-matrix
construction. Let
\[
T_{t,s}\triangleq
|\mathcal T^{\mathrm{cand}}_{t,s}|,\qquad
D_{t,s}\triangleq
|\mathcal D^{\mathrm{cand}}_{t,s}|.
\]

The guard activates when
\begin{equation}\small
g_{t,s}={}
\mathbf{1}\!\left[
T_{t,s}>\bar K_{T,s}\ \vee\
D_{t,s}>\bar K_{D,s}\ \vee\
T_{t,s}D_{t,s}>W_{\max,s}
\right]
\label{eq:guard}
\end{equation}
and, when active, admits at most $\bar K_{T,s}$ tracks and $\bar K_{D,s}$ detections in a fixed admission order: detections by descending confidence, and tracks by descending tracklet length, then earlier start frame, then higher confidence, with the track ID as a deterministic tie-break. The indicator is evaluated on the candidate
sets entering stage $s$ immediately before its cost matrix is built. In our experiments $W_{\max,s}=\bar K_{T,s}\bar K_{D,s}$, so the product
test is redundant with the two dimension checks and the admitted work never
exceeds $\bar K_{T,s}\bar K_{D,s}$. A smaller product budget would not
tighten admission by itself, because the guard truncates only to the two
dimension caps; enforcing one would require an additional truncation rule,
which we do not evaluate.

For each association stage $s$, the dimension caps are derived from an inert benign (not under attack) profile of video frames:
\begin{equation}\small \small
\begin{aligned}
\bar K_{T,s}
  &= \left\lceil
       Q_p\!\left(
       |\mathcal{T}^{\mathrm{cand}}_{\cdot,s}|
       \right)
     \right\rceil,
\qquad
\bar K_{D,s}
  = \left\lceil
       Q_p\!\left(
       |\mathcal{D}^{\mathrm{cand}}_{\cdot,s}|
       \right)
     \right\rceil,\\
W_{\max,s}
  &= \bar K_{T,s}\bar K_{D,s},
\end{aligned}
\label{eq:capcalib}
\end{equation}
where $Q_p$ is the empirical $p$th percentile ($p=99$ by default), and $\lceil\cdot\rceil$ rounds up
to an integer count. Equation~\eqref{eq:capcalib} gives the base rule; the
latency experiments multiply the percentile by a margin of $1.01$ before
rounding (Section~\ref{sec:setup}). Track and detection dimensions are calibrated
independently because their candidate distributions differ across stages. A product-only test cannot enforce both dimension caps. For example, if
$(\bar K_{T,s},\bar K_{D,s})=(20,25)$, the candidate shape
$(T_{t,s},D_{t,s})=(5,100)$ satisfies
$T_{t,s}D_{t,s}=5\cdot100=20\cdot25=500$, although its detection dimension
is four times the configured cap. The separate condition
$D_{t,s}>\bar K_{D,s}$ correctly detects this unbalanced shape.

Detections excluded by the guard are discarded for that frame: they reach
no later stage and are not eligible for births. L3 guard-excluded tracks are carried forward rather than deleted, permitting
later readmission. In the host used for all reported ByteTrack experiments, SlowTrack's
released tracker with TrackShield hooks and its original retention rule (Tables~\ref{tab:main}--\ref{tab:accuracy} and
Figs.~\ref{fig:orin_summary}--\ref{fig:pool}), an excluded Tracked-state track stays in the Tracked state: it never moves
to the Lost set, never accrues lost time, and cannot expire after $R$
frames, and its hit streak does not change, while its chronological age
(used in~\eqref{eq:phantom}) keeps increasing. Excluded Lost-state tracks
keep accruing lost time and expire normally. In the L3-only ablation, such
tracks make up almost the entire pool (Section~\ref{sec:orinresults}).
Retaining them can preserve real trajectories, but their prediction,
ranking, reconciliation, and memory costs are outside the bound.

\begin{algorithm}[t]
\caption{TrackShield update at frame $t$. Defaults are $w=5$ and
$q=3$; $n_V(t)$ is defined in \eqref{eq:nv}.}
\label{alg:trackshield}
\footnotesize
\begin{algorithmic}[1]
\setlength{\itemsep}{0pt}
\setlength{\parsep}{0pt}
\setlength{\topsep}{0pt}

\STATE $M_t\gets|\mathcal T_{t-1}|$;
       $\Delta M_t\gets M_t-M_{t-1}$.
\FOR{each protected association stage $s$}
  \STATE Form
  $(\mathcal T^{\mathrm{cand}}_{t,s},
    \mathcal D^{\mathrm{cand}}_{t,s})$.
  \STATE Evaluate $g_{t,s}$ using the stage-local candidate dimensions.
  \IF{$g_{t,s}=1$}
    \STATE Truncate to
    $|\mathcal T_{t,s}|\leq\bar K_{T,s}$ and
    $|\mathcal D_{t,s}|\leq\bar K_{D,s}$.
  \ENDIF
  \STATE Associate
  $(\mathcal T_{t,s},\mathcal D_{t,s})$;
  log pre/post-cap dimensions.
\ENDFOR
\STATE Set $U_t$ to the number of detections left unmatched after the
final association stage (before ByteTrack's \code{det\_thresh} score filter
and before L1 capping).
\STATE Form
$x_t=(M_t,\Delta M_t,U_t,W_t^{\mathrm{dem}},R_T,R_D)$
and compute $S_t$ and $n_V(t)$.
\IF{$n_V(t)\geq q$}
  \STATE $\ell_t\gets2$ \COMMENT{sustained}
\ELSIF{$S_t>\tau_{\mathrm{ano}}$}
  \STATE $\ell_t\gets1$ \COMMENT{emerging}
\ELSE
  \STATE $\ell_t\gets0$ \COMMENT{normal}
\ENDIF
\IF{$\ell_t\geq1$}
  \STATE Keep at most $B_{\max}$ birth candidates.
\ENDIF
\STATE Initialize admitted births and reconcile tracked, lost, and
removed sets.
\IF{$\ell_t=2$}
  \STATE Prune tracks satisfying the pruning rule~\eqref{eq:phantom}.
\ENDIF
\end{algorithmic}
\end{algorithm}

\paragraph*{Complexity}
The reported tracking-stage latency $T_{\mathrm{track}}^{\mathrm{TS}}$
is the enclosing tracker-update time and includes all TrackShield
operations. Its cost can be decomposed conceptually as
\begin{equation}\small
\small
\begin{aligned}
T_{\mathrm{track}}^{\mathrm{TS}}
=
T_{\mathrm{pred}}
+T_{\mathrm{L3}}
+\sum_{s=1}^{S}
T_{\mathrm{assoc}}(K_{T,s},K_{D,s})
+T_{\mathrm{monitor}} \\
+T_{\mathrm{L1/L2}}
+T_{\mathrm{recon}},
\label{eq:cost-decomp}
\end{aligned}
\end{equation}
For ByteTrack's fixed $S=3$ stages, signal formation, scoring, voting,
and guard checks are $O(1)$ per frame, apart from $O(L)$ median/MAD
maintenance per accepted baseline update ($L=120$). L1 sorts $U_t$ birth
candidates, $O(U_t\log U_t)$; L2 scans the persistent pool, $O(M_t)$; and
the L3 guard sorts the uncapped stage inputs $N_{T,s}$ and $N_{D,s}$
before truncation, $O(\sum_{s}[N_{T,s}\log N_{T,s}+N_{D,s}\log N_{D,s}])$,
which linear-time top-$K$ selection could reduce.
Proposition~\ref{prop:cap} therefore bounds only the post-cap dimensions
$(K_{T,s},K_{D,s})$ governing the association terms; guard selection,
prediction, monitoring and response actions, reconciliation, and
persistent-state maintenance may still depend on uncapped state.

\begin{proposition}[Conditional association-admission bound]
\label{prop:cap}
Suppose that at every protected association stage $s\in\{1,\dots,S\}$ the
hard admission guard is applied \emph{before} cost-matrix construction, admitting at
most $\bar K_{T,s}$ tracks and $\bar K_{D,s}$ detections (the admission order determines \emph{which} candidates are
admitted, never more than $\bar K_{T,s}$ tracks and $\bar K_{D,s}$ detections). Then for any finite candidate sets (in particular under an adversarial flood of arbitrarily many detections or births), $K_{T,s}\le\bar K_{T,s}$, $K_{D,s}\le\bar K_{D,s}$,
and
\[
  \Wt^{\mathrm{adm}} = \sum_{s=1}^{S} K_{T,s}K_{D,s}
  \;\le\; \sum_{s=1}^{S} \bar K_{T,s}\bar K_{D,s}.
\]
Consequently, any assignment implementation whose worst-case operation
count is monotone non-decreasing in both admitted dimensions has a finite,
implementation-specific worst-case operation count at those dimensions.
\end{proposition}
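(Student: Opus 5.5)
The plan is a direct argument from the guard's truncation rule, done stage by stage and then summed; no result beyond the guard definition \eqref{eq:guard} and its admission procedure is needed. Fix a frame $t$ and a protected stage $s$, with finite candidate sets $\mathcal T^{\mathrm{cand}}_{t,s}$ and $\mathcal D^{\mathrm{cand}}_{t,s}$ of sizes $T_{t,s}$ and $D_{t,s}$. I will split on the guard indicator $g_{t,s}$.

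\emph{Case $g_{t,s}=0$.} Nothing is truncated, so $K_{T,s}=T_{t,s}$ and $K_{D,s}=D_{t,s}$. Since $g_{t,s}=0$, every disjunct in \eqref{eq:guard} is false, so $T_{t,s}\le\bar K_{T,s}$ and $D_{t,s}\le\bar K_{D,s}$.

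\emph{Case $g_{t,s}=1$.} The admission procedure takes the top elements of each candidate list in its fixed order and stops at the cap. Hence $K_{T,s}=\min(T_{t,s},\bar K_{T,s})\le\bar K_{T,s}$ and likewise $K_{D,s}\le\bar K_{D,s}$. Finiteness of the candidate sets is used only so that the sort-and-truncate step terminates and yields well-defined counts.

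The order itself changes which candidates are admitted, never how many, so the bound holds for every finite input, including an adversarial flood of detections or births. Births matter only through the candidate sets they later generate, and those sets are capped the same way.

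Summing over stages then uses elementary arithmetic. All quantities are nonnegative integers, and $0\le K_{T,s}\le\bar K_{T,s}$ together with $0\le K_{D,s}\le\bar K_{D,s}$ give $K_{T,s}K_{D,s}\le\bar K_{T,s}\bar K_{D,s}$. This holds because the product of nonnegative numbers is monotone in each factor. Adding over $s=1,\dots,S$ yields $\Wt^{\mathrm{adm}}\le\sum_s\bar K_{T,s}\bar K_{D,s}$.

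For the consequence, let $C_s(k_T,k_D)$ be the worst-case operation count of the stage-$s$ assignment routine on a $k_T\times k_D$ instance. By the assumed monotonicity, $C_s(K_{T,s},K_{D,s})\le C_s(\bar K_{T,s},\bar K_{D,s})$. The right-hand side is finite because it is a single worst case over finitely many cost-matrix shapes. Summing gives the finite, implementation-specific bound $\sum_s C_s(\bar K_{T,s},\bar K_{D,s})$.

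The step I expect to need the most care is not computational but one of scoping. I must make precise that the hypothesis of the proposition is that the guard precedes cost-matrix construction at \emph{every} protected stage. Under that hypothesis, stage-$s$ inputs, including detections reused from earlier stages or tracks carried forward after exclusion, are exactly the post-truncation sets. If any stage built its matrix from uncapped state, the per-stage bound would fail for that stage.

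I would also note explicitly that the conclusion covers only the association terms in \eqref{eq:cost-decomp}. It says nothing about $T_{\mathrm{pred}}$, $T_{\mathrm{L3}}$ sorting, or $T_{\mathrm{recon}}$, which may scale with the uncapped pool.
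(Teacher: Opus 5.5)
Your argument is correct and is the paper's proof with the details filled in. The paper takes $K_{T,s}\le\bar K_{T,s}$ and $K_{D,s}\le\bar K_{D,s}$ as enforced by the guard, multiplies, sums over stages, and invokes monotonicity; your case split on $g_{t,s}$ and explicit nonnegativity just make those steps checkable against \eqref{eq:guard}. One small correction: finiteness of $C_s(\bar K_{T,s},\bar K_{D,s})$ does not follow from there being ``finitely many cost-matrix shapes,'' since a single shape still admits infinitely many real-valued cost matrices, so you should treat it, as the paper implicitly does, as part of the hypothesis that the implementation has a finite worst-case count at each fixed dimension.
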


\begin{proof}
The guard enforces $K_{T,s}\leq\bar K_{T,s}$ and
$K_{D,s}\leq\bar K_{D,s}$ before the cost matrix is built. Multiplying
these inequalities gives the stage-wise bound, and summing over the
protected stages gives the frame-wise bound; monotonicity of the solver's worst-case operation count in both admitted
dimensions gives the operation-count bound. 
\end{proof}

\begin{table}[htbp]
\caption{TrackShield parameters.}
\label{tab:parameters}
\centering
\scriptsize
\setlength{\tabcolsep}{2pt}
\begin{tabularx}{\columnwidth}{@{}
    >{\raggedright\arraybackslash}p{0.15\columnwidth}
    >{\raggedright\arraybackslash}p{0.36\columnwidth}
    >{\raggedright\arraybackslash}X
@{}}
\toprule
\textbf{Parameter}
& \textbf{Role}
& \textbf{Effect of increase} \\
\midrule
$\bar K_{T,s}$, $\bar K_{D,s}$
& Configured track/detection dimension caps
& More candidates and work admitted \\
$W_{\max,s}$
& Stage-local product-trigger threshold
& Less frequent product-triggered capping \\
$\tau_{\mathrm{ano}}$
& Anomaly-score threshold
& Later, less frequent heuristic response \\
$w$
& Anomaly-vote window length
& Longer vote history \\
$q$
& Sustained-response quorum
& More anomaly votes required \\
$B_{\max}$
& Per-frame L1 birth cap
& More births admitted when L1 is active \\
$\gamma$
& L2 pruning threshold
& Fewer tracks satisfy the pruning rule \\
$a_0,h_0,c_0$
& Age, hit streak, and confidence thresholds
& Changes pruning selectivity when L2 is active\\
$\alpha$
& EWMA adaptation rate
& Faster baseline adaptation \\
$L$
& Maximum baseline-window length
& More accepted history retained \\
$L_{\mathrm{init}}$
& Baseline samples required to enable scoring
& Later score activation \\
\bottomrule
\end{tabularx}
\end{table}

The dimension caps in Table~\ref{tab:parameters} control the admitted-workload--recall tradeoff;
$B_{\max}$ and $\tau_{\mathrm{ano}}$ control heuristic birth admission
and response sensitivity. Tight caps may truncate legitimate candidates,
while aggressive anomaly thresholds may cause excessive pruning.

\section{Performance Evaluation}
\label{sec:eval}
\subsection{Experimental Setup}
\label{sec:setup}
\begin{figure*}[!t]
\centering
\includegraphics[width=0.98\textwidth]{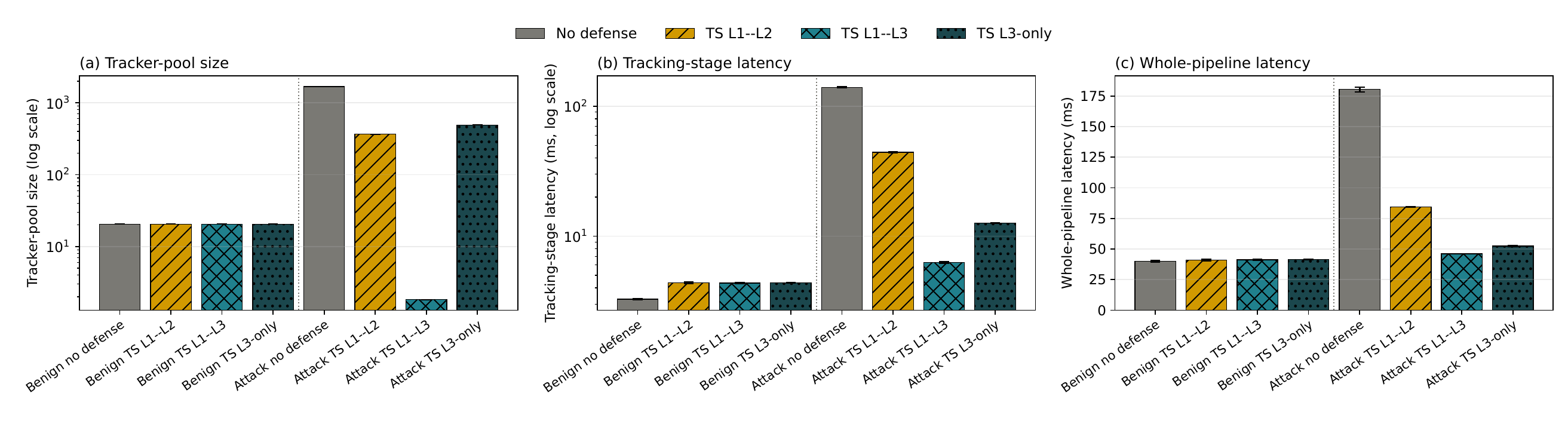}
\caption{Pool and latency results for the eight MOT17-02 conditions in
Table~\ref{tab:main}. Bars show medians and error bars show IQRs across
five repetition-level means.}
\label{fig:orin_summary}
\end{figure*}

\textbf{Platform and configuration.}
Latency experiments use YOLOX-S+ByteTrack at $608\times1088$, FP16, and at
most 500 retained detections per frame on a Jetson AGX Orin (MAXN, locked
clocks), with detector confidence/NMS thresholds $0.25/0.45$,
\code{track\_thresh}$=0.6$, \code{match\_thresh}$=0.9$, and $R=30$. All
ByteTrack results use SlowTrack's released tracker~\cite{slowtrack}, against
which the attack is defined. It differs from official ByteTrack in its
evaluation preprocessing (RGB/255, without the ImageNet mean/std
normalization of official ByteTrack) and in its second-pass detection set:
an indexing quirk feeds the second stage the $\min(m,100)$ highest-scoring
detections overall ($m$ being the middle-band count) instead of the middle
band, so a detection matched in the first stage can be reused there. Adversarial frames use the confidence-maximization path of the released
SlowTrack code (220 normalized-gradient steps per frame pair, no norm
budget), with its input normalization removed to match this preprocessing;
the box-area and feature-matching terms of Section~\ref{sec:slowtrack} are
inactive in this path.
Accuracy experiments replay cached detections (confidence $0.001$, NMS
$0.7$, \code{track\_thresh}$=0.5$, \code{match\_thresh}$=0.8$) over the matched 400-frame range, separately from the latency runs, and are scored with
\texttt{motmetrics} against MOTChallenge ground truth. Caps use benign $p=99$ calibration on clean frames of the same sequence.
For the MOT17-02 latency runs, calibrated on frames 101--400 with a $1.01$ margin, the caps are $(24,20)$,
$(4,19)$, and $(2,4)$ for the high-score, low-score, and unconfirmed
stages, so by Proposition~\ref{prop:cap} at most
$24{\cdot}20+4{\cdot}19+2{\cdot}4=564$ candidate pairs are admitted to association per frame. The accuracy replay calibrates on the full 600-frame benign sequence without a margin, giving $(37,25)$, $(13,81)$, and $(2,4)$. The monitor baseline is warm-started from the inert
profiling pass.

\textbf{Defense configuration.}
TrackShield protects all three association stages and the birth and
reconciliation steps. Unless stated otherwise, $\tau_{\mathrm{ano}}=6$,
$w=5$, $q=3$, $\alpha=0.02$, $B_{\max}=15$, $L=120$, $L_{\mathrm{init}}=20$,
and $w_{R_D}=0.5$ (all other weights $1$); the accuracy sweep evaluates
$\tau_{\mathrm{ano}}\in\{4,6,8,10\}$.

\textbf{Protocol.}
Paired runs use identical frames and detector settings. Pipeline latency
is CUDA-synchronized whole-frame timing over the last 300 of 400 frames;
each ByteTrack Jetson condition is repeated five times, and
Tables~\ref{tab:main} and~\ref{tab:crossseq} report medians of
repetition-level means. R-Lat is mean pipeline latency relative to the
paired benign no-defense baseline minus one, and $\overline{M_t}$ is the
mean pool size. Cross-sequence attacks are generated per sequence; the
BoT-SORT study reuses the MOT17-02 attack without re-optimization.

\subsection{Edge Latency and Workload}
\label{sec:orinresults}

The Orin campaign uses $\tau_{\mathrm{ano}}=6$.
Table~\ref{tab:main} reports the eight evaluated conditions, and
Fig.~\ref{fig:orin_summary} visualizes their pool and latency results.

Under attack, the undefended pipeline reaches
$R\text{-}\mathrm{Lat}=3.48$, a mean persistent pool of $1{,}677.7$
tracks, and mean tracking-stage latency of $139.90$\,ms. Full
TrackShield reduces these values to $0.15$, $1.8$ tracks, and
$6.27$\,ms, respectively. At least one calibrated per-stage admission
cap binds on every attacked frame. Under benign input, the mean pool
remains $20.5$ tracks across all conditions, while tracking-stage latency
ranges from $3.27$ to $4.38$\,ms.

For the MOT17-02-FRCNN campaign in Table~\ref{tab:main}, the ablations
reveal mechanism interaction. Without L3, attacked L1--L2 drops $414{,}395$ birth candidates over the $5\times300$ measured frame-repetitions (a sum of per-frame counts, not distinct objects); together with L2's removal of newborn tracks, this reduces the mean pool from $1{,}677.7$ to $365.0$ tracks.
With L3 enabled, L1 drops no birth candidates under full L1--L3, because births come only from the at most $4$ detections admitted to the unconfirmed stage, below $B_{\max}=15$; L2 remains active,
recording $5{,}335$ per-frame pruning events. With L1 inactive, the
L3-only--to--full-policy difference is attributable to L2 in this
campaign: the mean pool falls from $489.7$ to $1.8$ tracks and
tracking-stage latency from $12.62$ to $6.27$\,ms. The L3-only pool, almost entirely guard-excluded Tracked-state tracks (Section~\ref{sec:respond}), grows linearly over the measured window, so its mean depends on the window length.

TrackShield's internal overhead is approximately $1$\,ms per frame. Its
timers cover monitoring, birth capping, pruning, association-input
capping, and bookkeeping; the median across five repetition-level mean
timer sums is $0.99$\,ms under benign input and $1.20$\,ms under attack.
This overhead is included in the tracking-stage latency of Table~\ref{tab:main}. The whole-pipeline effect is small (benign
P99 $41.4$ versus $42.4$\,ms), but relative to the benign tracker it is not:
benign tracking-stage latency rises from $3.27$ to $4.35$\,ms under full
L1--L3, about $33\%$.

\begin{figure}[!h]
\centering
\includegraphics[width=0.85\columnwidth]{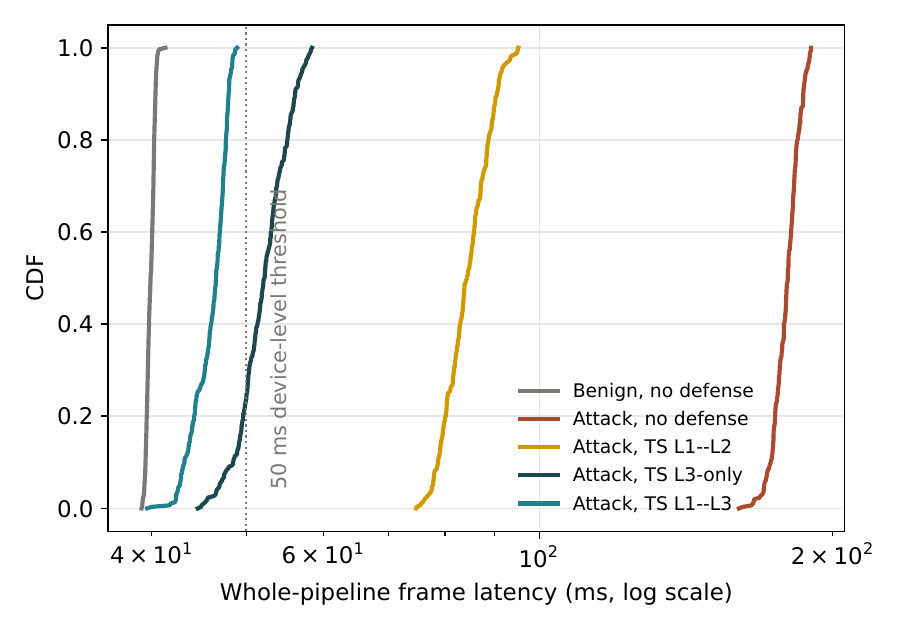}
\caption{Whole-pipeline per-frame latency CDFs on the Jetson AGX Orin for
the benign no-defense baseline and the four attacked conditions; each CDF
pools the per-frame latencies of all five repetitions.  The
vertical line marks the 50\,ms device-level threshold used for the
deadline-miss analysis.}
\label{fig:orincdf}
\end{figure}

\begin{figure*}[!t]
\centering
\includegraphics[width=0.9\linewidth]{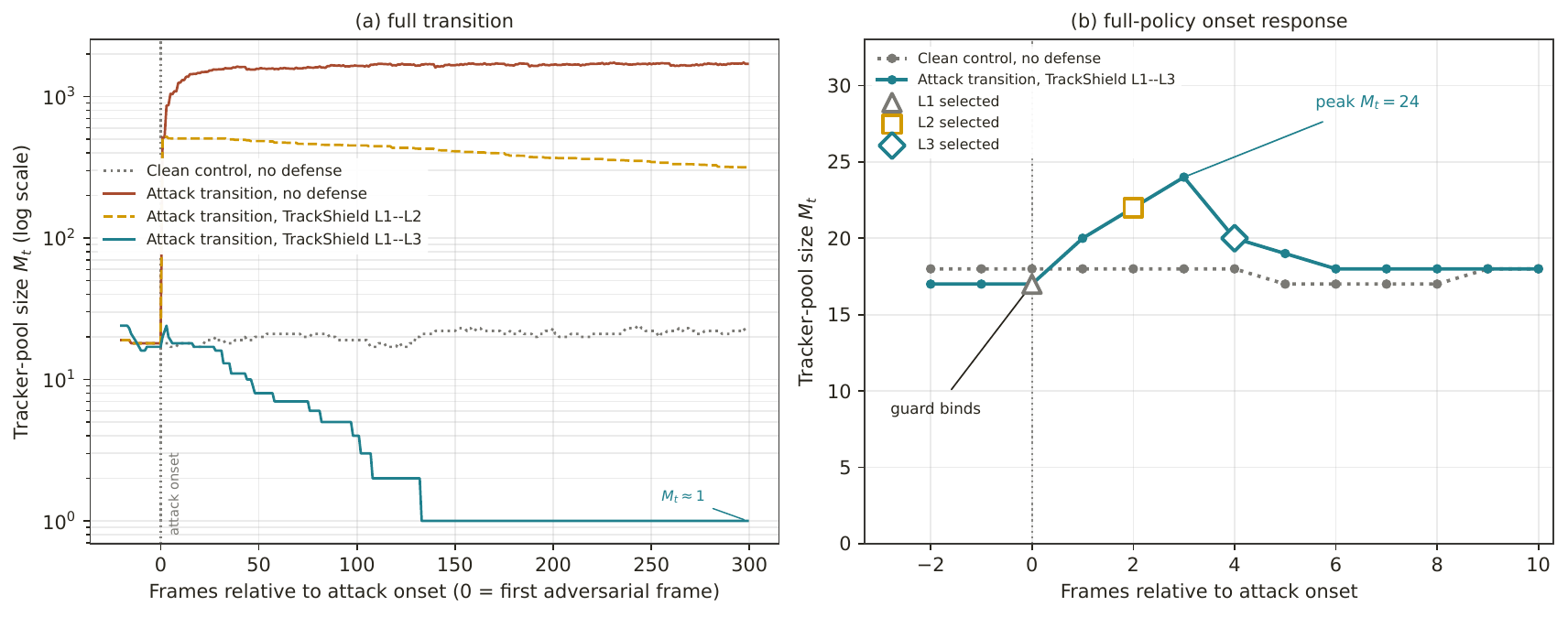}
\caption{Persistent tracker-pool response during a separate
clean-to-adversarial transition run on Jetson AGX Orin. Time is
relative to the first adversarial frame ($t=0$); panel~(a) shows the full
trajectory on a log scale, and panel~(b) enlarges the onset of the defended (L1--L3) run on a linear scale. $M_t$ is sampled at frame entry, so the first attack-induced change appears at $t=1$. }
\label{fig:pool}
\end{figure*}

\begin{table}[!t]
\caption{Jetson AGX Orin results on MOT17-02-FRCNN. R-Lat, pool mean,
and tracking mean are medians of five repetition-level means; P99 pools
all post-warm-up frames. L1--L2 disables association capping, whereas
L3-only disables birth capping and pruning.}
\label{tab:main}
\centering
\footnotesize
\setlength{\tabcolsep}{2.7pt}
\begin{tabular}{L{2.3cm} cccc}
\toprule
\textbf{Condition} &
\textbf{$R$-Lat} &
\textbf{Pool mean} &
\textbf{Trk mean (ms)} &
\textbf{Pipe P99 (ms)} \\
\midrule
Benign, no defense & $0.00$ & $20.5$ & $3.27$ & $41.4$ \\
Benign, TS L1--L2  & $0.03$ & $20.5$ & $4.38$ & $42.5$ \\
Benign, TS L1--L3  & $0.02$ & $20.5$ & $4.35$ & $42.4$ \\
Benign, TS L3-only & $0.03$ & $20.5$ & $4.38$ & $42.5$ \\
\midrule
Attack, no defense & $3.48$ & $1{,}677.7$ & $139.90$ & $190.9$ \\
Attack, TS L1--L2  & $1.09$ & $365.0$ & $44.18$ & $94.6$ \\
Attack, TS L1--L3  & $0.15$ & $1.8$ & $6.27$ & $48.8$ \\
Attack, TS L3-only & $0.30$ & $489.7$ & $12.62$ & $58.4$ \\
\bottomrule
\end{tabular}
\end{table}

The object detector-dominated benign pipeline already exceeds the commonly used
33\,ms reference threshold, so that threshold does not distinguish
benign operation from an attacked pipeline. We therefore evaluate
deadline misses at an attainable 50\,ms device-level threshold. The miss
rate is the fraction of pooled post-warm-up frames across all five
repetitions whose whole-frame latency exceeds 50\,ms. The undefended attack and L1--L2 exceed this threshold on 100\% of frames. L3-only, a static per-stage cap without monitoring or pruning, reduces the miss rate to $79.6\%$, whereas full L1--L3 reduces it to
$0.7\%$, compared with $0.0\%$ for the benign conditions. Fig.~\ref{fig:orincdf} shows the full per-frame latency distributions.

The separate transition run in Fig.~\ref{fig:pool} reveals onset dynamics
hidden by the post-warm-up aggregates in Table~\ref{tab:main}.  In the undefended run, the pool rises rapidly toward
approximately $1{,}700$ tracks. Under full L1--L3, at least one
stage-local admission cap binds on the first adversarial frame, and the
pool peaks briefly at $24$ tracks before falling below its benign level.
The benign tracks present before onset gradually stop matching and expire $R=30$ frames after their last match; with L2 removing newborns, the defended pool declines stepwise to $1$--$2$ tracks within about 130 frames, below the benign level of about $20$. This small pool is a result of the full policy, not of
Proposition~\ref{prop:cap}.

\subsection{Cross-Sequence Robustness}
\label{sec:crossseq}
We apply the protocol of Section~\ref{sec:setup} to seven MOT17 sequences
(02, 04, 05, 09, 10, 11, and 13), generating attacks independently per sequence.

Before interpreting defended results as mitigation evidence, we apply a
predeclared \emph{attack-efficacy gate}: under no defense, the ratio of
the attacked mean persistent-pool size to its benign counterpart must be
at least $1.5\times$. A sequence that fails this gate is retained as a
negative or specificity result but excluded from mitigation aggregates.
In Table~\ref{tab:crossseq}, R-Lat is residual whole-pipeline latency
inflation; ``Trk red.'' is the undefended attacked tracking-stage latency
divided by its full-L1--L3 counterpart; and ``Def./benign'' is the
defended attacked mean pool divided by the benign no-defense mean pool.
The six-sequence mean is unweighted; ratios are computed from unrounded measurements.

\begin{table}[!hbt]
\caption{Cross-sequence Jetson AGX Orin results over 300 measured frames
after 100 warm-up frames. The MOT17-02 row is an independent rerun of
Table~\ref{tab:main}'s protocol. $\dagger$ marks a gate-failing sequence
excluded from the six-sequence mean.}
\label{tab:crossseq}
\centering
\scriptsize
\setlength{\tabcolsep}{1.8pt}
\begin{tabular}{l r rr r rr r}
\toprule
\textbf{Sequence} &
\textbf{Pool} &
\multicolumn{2}{c}{\textbf{Attack R-Lat}} &
\textbf{Trk} &
\textbf{Benign} &
\textbf{Def.\ atk.} &
\textbf{Def./} \\
\cmidrule(lr){3-4}
&
\textbf{ratio} &
\textbf{no def.} &
\textbf{L1--L3} &
\textbf{red.} &
\textbf{pool} &
\textbf{pool} &
\textbf{benign} \\
\midrule
MOT17-02
& $82.0\times$ & $3.51$ & $0.146$ & $22.9\times$
& $20.5$ & $1.83$ & $0.089$ \\
MOT17-05
& $159.9\times$ & $3.96$ & $0.173$ & $24.7\times$
& $11.1$ & $1.73$ & $0.156$ \\
MOT17-09
& $198.1\times$ & $4.02$ & $0.143$ & $26.9\times$
& $9.7$ & $1.43$ & $0.146$ \\
MOT17-10
& $100.3\times$ & $3.82$ & $0.128$ & $24.6\times$
& $18.2$ & $2.30$ & $0.127$ \\
MOT17-11
& $200.8\times$ & $4.12$ & $0.192$ & $26.2\times$
& $9.6$ & $1.88$ & $0.195$ \\
MOT17-13
& $103.0\times$ & $3.75$ & $0.121$ & $24.5\times$
& $17.3$ & $1.81$ & $0.105$ \\
\midrule
\textbf{6-seq mean}
& $\mathbf{140.7\times}$
& $\mathbf{3.86}$
& $\mathbf{0.150}$
& $\mathbf{25.0\times}$
& $\mathbf{14.4}$
& $\mathbf{1.83}$
& $\mathbf{0.136}$ \\
\midrule
MOT17-04$^\dagger$
& $1.2\times$ & $0.06$ & $-0.008$ & $1.8\times$
& $41.5$ & $4.00$ & $0.096$ \\
\bottomrule
\end{tabular}
\end{table}

On the six sequences that pass the efficacy gate, the undefended attack
inflates the mean persistent pool by $82$--$201\times$ relative to benign
operation, with an unweighted mean of $140.7\times$. The corresponding
undefended R-Lat values range from $3.51$ to $4.12$. Full L1--L3 reduces
tracking-stage latency by $22.9$--$26.9\times$, with an unweighted mean
reduction of $25.0\times$, and lowers R-Lat to $0.121$--$0.192$, with an
unweighted mean of $0.150$. The calibrated admission caps bind on every
attacked frame in all six attack-effective sequences.

The defended attacked pools are $8.9\%$--$19.5\%$ of the corresponding
benign pools, below the benign level because L2 also removes legitimate
newborns (Section~\ref{sec:respond}).

\paragraph*{MOT17-04 negative case}
MOT17-04 fails the $1.5\times$ attack-efficacy gate: its attacked mean
pool of $50.2$ tracks is only $1.2\times$ its benign $41.5$. Full L1--L3 reduces its attacked pool to $4.00$ tracks, $9.6\%$ of the benign pool, which suggests possible over-suppression.

\subsection{Tracking Utility}
\label{sec:accuracy}

\begin{table}[!t]
\caption{Tracking accuracy on MOT17-02-FRCNN over a matched 400-frame
scope; benign and attacked rows are scored against the same 400-frame
ground truth ($12{,}075$ boxes). Multiple Object Tracking Accuracy (MOTA) and identity F1 score
(IDF1) are percentages; IDSW denotes identity switches, FP false
positives, and FN false negatives. Interv.\ is the percentage of frames
with selected response level $\ell_t>0$, and Bind is the percentage on
which at least one per-stage admission cap truncates a candidate set; Bind exceeds Interv.\ in every TrackShield row because the guard is evaluated independently of the monitor.
Dashes indicate that TrackShield-specific rates are not applicable.}
\label{tab:accuracy}
\centering
\scriptsize
\setlength{\tabcolsep}{2.0pt}
\begin{tabular}{l ccccc cc}
\toprule
\textbf{Condition} & \textbf{MOTA} & \textbf{IDF1} &
\textbf{IDSW} & \textbf{FP} & \textbf{FN} &
\textbf{Interv.} & \textbf{Bind} \\
\midrule
\multicolumn{8}{l}{\emph{Benign input, matched 400-frame range}}\\
No defense
& $57.57$ & $56.83$ & $89$ & $1{,}175$ & $3{,}860$ & --- & --- \\
TS L1--L3, $\tau{=}4$
& $17.43$ & $25.01$ & $0$ & $32$ & $9{,}938$ & $99.8\%$ & $100.0\%$ \\
TS L1--L3, $\tau{=}6$
& $56.51$ & $56.36$ & $83$ & $1{,}162$ & $4{,}007$ & $5.2\%$ & $11.2\%$ \\
TS L1--L3, $\tau{=}8$
& $56.75$ & $56.32$ & $85$ & $1{,}162$ & $3{,}975$ & $0.5\%$ & $12.2\%$ \\
TS L1--L3, $\tau{=}10$
& $56.75$ & $56.32$ & $85$ & $1{,}162$ & $3{,}975$ & $0.5\%$ & $12.2\%$ \\
\midrule
\multicolumn{8}{l}{\emph{Attacked input, same 400-frame range}}\\
No defense
& $-1{,}544.08$ & $1.12$ & $752$ & $189{,}508$ & $8{,}263$ & --- & --- \\
TS L1--L3, $\tau{=}4$
& $-0.07$ & $0.43$ & $0$ & $36$ & $12{,}047$ & $94.8\%$ & $100.0\%$ \\
TS L1--L3, $\tau{=}6$
& $-0.07$ & $0.43$ & $0$ & $36$ & $12{,}047$ & $94.8\%$ & $100.0\%$ \\
TS L1--L3, $\tau{=}8$
& $-0.13$ & $0.43$ & $0$ & $44$ & $12{,}047$ & $91.8\%$ & $100.0\%$ \\
TS L1--L3, $\tau{=}10$
& $-0.13$ & $0.43$ & $0$ & $44$ & $12{,}047$ & $91.2\%$ & $100.0\%$ \\
\bottomrule
\end{tabular}
\end{table}

Table~\ref{tab:accuracy} evaluates tracking utility under benign and attacked input over the same 400-frame range.

Under benign input at the default $\tau_{\mathrm{ano}}=6$, MOTA decreases
from $57.57$ to $56.51$ ($-1.06$ points) and IDF1 from $56.83$ to $56.36$
($-0.47$ points); at $\tau_{\mathrm{ano}}=10$ the losses are $0.82$ and
$0.51$ points, so neither setting is better on both metrics. By contrast, at $\tau_{\mathrm{ano}}=4$, the heuristic
response is active on nearly every benign frame and an admission cap binds
on every frame. Because sustained L2 removes every newborn track, FP falls
to 32 and IDSW to zero, but FN rises to $9{,}938$, causing MOTA to drop to $17.43\%$. In an auxiliary
accuracy-replay sensitivity test, not used in any latency result,
exempting each newborn on its birth frame and lowering $h_0$ to $1$ removes this collapse (MOTA $56.99$
at $\tau_{\mathrm{ano}}=4$) and leaves the other benign rows unchanged or
slightly improved. Among the tested settings, $\tau_{\mathrm{ano}}=6$ is
the lowest threshold that avoids the collapse under the default rule. At $\tau_{\mathrm{ano}}=8$ and $10$, the selected response level is identical on every benign frame, hence the identical benign rows.

Without defense, SlowTrack reduces MOTA/IDF1 from $57.57/56.83$ under
benign input to $-1{,}544.08/1.12$ under attack: the phantom detections
become $189{,}508$ false-positive outputs, while FN rises from $3{,}860$ to
$8{,}263$. The attack degrades tracking accuracy as well as increasing workload. At
$\tau_{\mathrm{ano}}=10$, full TrackShield removes nearly all of these false positives ($44$ versus
$189{,}508$), while FN rises to $12{,}047$ of the $12{,}075$ ground-truth
boxes, giving attacked MOTA/IDF1 of $-0.13/0.43$. In the same auxiliary test (birth-frame exemption and $h_0=1$) at $\tau_{\mathrm{ano}}=10$, the replayed attacked pool rises from $2.3$ to $50.9$ tracks and the false positives to $6{,}021$ (MOTA $-45.84$), while FN stays above $95\%$.

\subsection{Cross-Tracker Admission-Control Case Study}
\label{sec:crosstracker}
The preceding evaluations use ByteTrack. To examine the admission-control
mechanism beyond ByteTrack, we apply the MOT17-02 attack, with the same
cached-detection protocol, to a tracker-specific BoT-SORT-ReID
integration. This integration retains per-stage association caps and track pruning, but has no L1 birth cap and does not implement the
complete ByteTrack monitoring interface; Table~\ref{tab:botsort} reports this integration as a case study. Pool, accuracy, and cap-binding values are medians over five GPU-cluster
repetitions. The Trk latency row (including ReID) is the mean over all 400
frames of one Jetson run per condition, with no run-to-run spread, and is not comparable with Table~\ref{tab:main}. The cluster runs use caps $(30,20)$, $(12,82)$, and $(2,4)$, calibrated on their 400-frame benign profile without a margin; the Jetson run uses $(29,20)$, $(13,82)$, and $(2,4)$ from its own benign profile. The integration's monitor signals differ from Algorithm~\ref{alg:trackshield}: $A_t$ counts activated tracks and $U_t$ unconfirmed tracks, so $R_T$ and $R_D$ are not confined to $[0,1]$, and monitor and pruning behavior is not directly comparable with ByteTrack.

\begin{table}[!t]
\caption{BoT-SORT-ReID admission-control case study on MOT17-02-FRCNN.
The integration uses per-stage caps and pruning, with tracker-specific
monitor signals and no L1 birth cap. Accuracy metrics use the same
400-frame ground truth as Table~\ref{tab:accuracy}. IDSW denotes identity
switches.}
\label{tab:botsort}
\centering
\scriptsize
\setlength{\tabcolsep}{2.5pt}
\begin{tabular}{@{}lrrrr@{}}
\toprule
& \multicolumn{2}{c}{\textbf{Benign}} &
  \multicolumn{2}{c}{\textbf{Attack}} \\
\cmidrule(lr){2-3}\cmidrule(lr){4-5}
\textbf{Metric} &
\textbf{No def.} & \textbf{TrackShield} &
\textbf{No def.} & \textbf{TrackShield} \\
\midrule
Mean pool & $22.60$ & $22.66$ & $3{,}975.29$ & $1.69$ \\
MOTA      & $47.41$ & $46.82$ & $-6{,}972.63$ & $-0.81$ \\
IDF1      & $56.09$ & $56.66$ & $0.42$ & $0.21$ \\
IDSW      & $80$ & $81$ & $1{,}551$ & $2$ \\
FP        & $980$ & $1{,}027$ & $847{,}624$ & $113$ \\
FN        & $5{,}290$ & $5{,}313$ & $4{,}845$ & $12{,}058$ \\
Bind      & $0\%$ & $6.75\%$ & $0\%$ & $100\%$ \\
\midrule
Trk latency & $108.6$\,ms & $110.3$\,ms & $26.65$\,s & $10.94$\,s \\
\bottomrule
\end{tabular}
\end{table}

On BoT-SORT-ReID, the transferred attack passes the efficacy gate and
TrackShield reduces the mean attacked pool from 3,975.29 to 1.69,
with caps binding on every attacked frame. Benign MOTA decreases by $0.59$ points, while benign IDF1 rises slightly,
from $56.09$ to $56.66$; under attack, FN reaches $12{,}058$ of $12{,}075$. TrackShield reduces attacked
tracking-stage latency from 26.65\,s to 10.94\,s on Jetson, but ReID alone accounts
for approximately 10.85\,s because it processes the raw detections before
the admission guard. The guard enforces Proposition~\ref{prop:cap}, but upstream ReID cost lies outside it.

\subsection{Limitations}
\label{sec:limitations}
Proposition~\ref{prop:cap} covers only admitted association work (Section~\ref{sec:respond}), not object detector, ReID, or end-to-end cost. The evaluated attack is SlowTrack's confidence-maximization path, white-box
and without a perturbation budget, transferred from the undefended
pipeline. An adaptive attacker cannot raise admitted association work above $\sum_s\bar K_{T,s}\bar K_{D,s}$ (Proposition~\ref{prop:cap}); we have not evaluated small-norm perturbations or attacks that evade anomaly voting or target uncapped stages such as ReID, prediction, and reconciliation.

\section{Conclusion}
\label{sec:conclusion}
TrackShield addresses SlowTrack-style availability attacks at the layer
where their workload accumulates: persistent tracker state and data
association. The always-on L3 guard provides a deterministic bound on
the track and detection dimensions admitted to each protected association
stage, while L1--L2 limit
pool growth through birth capping and track pruning. On a Jetson AGX Orin, full TrackShield reduces attacked R-Lat from $3.48$ to $0.15$ on MOT17-02, and attacked tracking-stage latency by $22.9$--$26.9\times$ across six attack-effective MOT17 sequences. Under attack, adversarial detections corrupt the tracker's input, and TrackShield prioritizes timely output over tracking accuracy. Timely, if degraded, output still gives downstream monitoring and fallback mechanisms an opportunity to respond, whereas a stalled or unresponsive pipeline leaves downstream modules acting on stale world state. Remaining challenges include restoring tracking utility under attack, addressing
pipeline costs outside the association bound, and evaluating and defending against adaptive attacks.
\bibliographystyle{IEEEtran}
\bibliography{references}
\end{document}